\documentclass[10pt]{article}

\usepackage{graphicx,amsmath,amsthm,amsfonts,setspace}

\textwidth 6.5in
\oddsidemargin -0.4cm
\evensidemargin -0.4cm

\doublespacing

\newcommand{\diag}{\text{diag}}

\newcommand{\mn}{\mathbb{N}} 
\newcommand{\bC}{\mathbf{C}}

\newcommand{\bS}{\mathbf{S}}

\newcommand{\bD}{\mathbf{D}}
\newcommand{\bR}{\mathbf{R}}
\newcommand{\br}{\mathbf{r}}
\newcommand{\bU}{\mathbf{U}}

\newcommand{\sinc}{\text{sinc}}
\newcommand{\ebno}{E_b/N_0}

\newtheorem{theorem}{Theorem}

\author{Richard G. Clegg, Safa Isam, Ioannis Kanaras and 
Izzat Darwazeh}

\title{A practical system for improved efficiency in frequency 
division multiplexed wireless networks}

\begin{document}

\setcounter{page}{0}
\maketitle
\thispagestyle{empty} 

\begin{abstract}
Spectral efficiency is a key design issue for all wireless communication systems.  
Orthogonal frequency division multiplexing (OFDM) is a very well-known technique for efficient 
data transmission over many carriers overlapped in frequency.  Recently, several 
papers have appeared which describe 
spectrally efficient variations of multi-carrier systems where the condition 
of orthogonality is dropped.  Proposed techniques suffer from two weaknesses:  
Firstly, the complexity of  generating the signal is increased. Secondly, the 
signal detection 
is computationally demanding.  Known methods suffer either unusably high complexity
or high error rates because of the inter-carrier interference.  
This work addresses both problems by proposing new transmitter and receiver architectures whose design 
is based on using the simplification that a rational Spectrally Efficient 
Frequency Division Multiplexing (SEFDM) 
system can be treated as a set of overlapped and interleaving 
OFDM systems. 

The efficacy of the proposed designs is shown through detailed simulation of systems with different 
signal types and carrier dimensions.  The 
decoder is heuristic but in practice produces very good results which are close to the theoretical 
best performance in a variety of settings.  The system is able to produce 
efficiency gains of up to 20\% with negligible impact on the required
signal to noise ratio.
\end{abstract}


\pagebreak

\section{Introduction}
\label{sec:intro}
Orthogonal Frequency Division Multiplexing (OFDM) is a well-known technique
for efficient data transmission. OFDM is at the core of
communications technologies such as Digital Audio Broadcasting (DAB) and 
Digital Video Broadcast
(DVB), wireless broadband networks such as 
Worldwide Interoperability for Microwave Access
(WiMAX) and long 
term evolution (LTE) systems. 
In OFDM, data is transmitted using a number of orthogonal carrier frequencies.
Recently many authors have proposed non-orthogonal systems or
Spectrally Efficient Frequency Division Multiplexing (SEFDM) systems.
OFDM symbols are sent on carrier frequencies separated by
$F$ and the symbols remain constant for time $T$ (the symbol
period) with $TF=1$.  This ensures no sub-channel interference.
For SEFDM, $TF = \alpha < 1$ and, while there will necessarily 
be sub-channel interference, the key advantage is that the available 
spectrum can be used more 
efficiently.

This paper suggests design for a simple to implement transmitter and
receiver/decoder for SEFDM systems.  The transmitter design and the decoder design
are interlinked.  The key
insight is to see SEFDM as a small number of interleaved OFDM systems.
The design can increase spectral efficiency by 20\%
using similar techniques to traditional OFDM and with little
compromise to the required signal to noise ratio for the system.  
The designs require only slightly more complexity in the receiver and
transmitter. 
The decoder
design is not optimal but is, instead, designed to be fixed (and low) complexity
heuristic
with ``good enough" performance.  It is shown in simulation that gains significantly 
above 20\%
are unlikely without a more radical redesign of SEFDM systems since even optimal
decoding begins to suffer large increases in bit error rate 
relative to OFDM.

Section \ref{sec:back}
describes other research and the background.
The structure of the paper is as follows. Section \ref{sec:sefdm_intro} 
provides a brief introduction to SEFDM.   Section 
\ref{sec:sefdmmaths} derives the main theorem necessary for
our receiver and decoder design.  Section \ref{sec:design}
describes the receiver and decoder design.
Section \ref{sec:results} shows the designs perform well in simulation
and section \ref{sec:conclusions} gives conclusions and further work.

\subsection{Spectrally Efficient FDM approaches}
\label{sec:back}

The idea of non orthogonal and spectrally efficient systems occurs in the 1975 work 
of Mazo et al \cite{Mazo}.  More recently, the idea of multi-carrier 
spectrally efficient systems was introduced in
\cite{Rodrigues_GS_FDM} and termed SEFDM.
Similar systems use the names 
high compaction multi-carrier
modulation (HC-MCM) \cite{Hamamura_EfficientMulticarrier,Hamamura2005}
or
overlapped FDM (OvFDM) \cite{Jian2007a}. 
Related systems are fast OFDM (FOFDM) \cite{Rodrigues_FOFDM} and 
the $M$-ary amplitude shift keying OFDM \cite{Xiong_MASK_FOFDM}, 
both proposing reducing the spectrum to the half of an equivalent 
OFDM but subject to the limitation that 
the information symbols are only one-dimensional (e.g. BPSK or ASK). 
In addition, offset QAM proposed in \cite{Hirosaki1981} succeeded 
in eliminating guard bands and hence supported higher spectral efficiency.

Recently the concept of non-orthogonal carriers has found 
its way into the very high bit rate optical communications field. 
The applicability of Fast OFDM concept of \cite{Rodrigues_FOFDM} has 
been demonstrated in \cite{Zhao2010} in a system termed 
Optical Fast OFDM that provides attractive error performance for 
one dimensional modulation schemes. Furthermore,  \cite{Yamamoto2010} 
proposed the so called optical Dense OFDM (DOFDM) which can 
accommodate higher order modulations. 
Simulation and experimental tests confirmed almost the same error 
performance as conventional OFDM. By orthogonally polarizing the 
sub-carriers it is possible to enhance immunity to the 
chromatic dispersion for both conventional OFDM and DOFDM. 
A related
system termed non-orthogonal FDM (NOFDM) 
proposed restoration of orthogonality from the view point of the 
input symbols by employing orthogonal pulse-shaping \cite{Kozek1998}, 
where details of appropriate pulse shapes and power and bit loading 
provided in \cite{Strohmer_OptimalOFDM_Sphere,Strohmer_OptimalOFDM} and 
\cite{Kliks2009}.

There are two problems with SEFDM systems: 
efficiently generating such a signal (the transmitter problem) and
efficiently detecting and decoding 
such a signal. For the transmitter problem, 
a known method (first proposed by the authors)
is to use the inverse 
fractional Fourier transform \cite{Kanaras_GSD}. 
The HC-MCM system shortens the symbol transmission time and hence
transmits by using OFDM techniques with zero-padded input and
truncated output
\cite{Hamamura_EfficientMulticarrier}. 
Recently several techniques to generate SEFDM signals using
the Inverse 
Discrete Fourier Transform (IDFT) have been proposed by the 
authors \cite{safa200909,Safa_CSNDSP,Safa201007} and
have been implemented 
in hardware \cite{perrett2011}.

Obtaining optimal
solutions for the decoding problem is non polynomial (NP) hard.  Various
techniques are suggested: some of the better known solutions
to the decoding problem are zero forcing (ZF) 
\cite{Pammer_ZF-ML_Combination,Kanaras_MMSEML}, 
minimum mean squared error (MMSE) \cite{5:Yuan}, the
sphere decoder \cite{Viterbo_SD_Principle,Kanaras_GSD}
and semi-definite relaxation (SDR)
\cite{Wing-Kin_SDR,Kanaras_SDP}.  Maximum likelihood methods have
extremely high complexity and cannot be used in practice for anything other
than the smallest systems. 
Methods such as SDR, MMSE and ZF have lower complexity  but
introduce a significant error penalty, particularly when noise levels are
high or the number of carriers large \cite{Kanaras_MMSEML}.  
They are therefore unlikely to prove useful in 
systems with many carriers or practical noise levels.

By contrast, the sphere decoder (SD) is 
a method of dynamic programming that can handle the NP hardness of 
overlapped optimisation 
problems achieving the optimal solution --- 
SD
techniques are investigated by Kanaras et al in 
\cite{Kanaras_GSD,Kanaras_PrunedSD}. 
Much promising research has taken place on the use of SD for
SEFDM.  
Furthermore, \cite{Kanaras2010} developed a new sub-optimal 
SD based detector that uses semi-definite programming to reduce 
the complexity of the SD, whereas \cite{safa201105} and 
\cite{safa201103} proposed the use of a fixed complexity 
sphere decoders (FSD) and then a combination of FSD and the 
truncated singular value decomposition (SVD) to solve the problem of
the variable complexity of the SD whilst still providing attractive 
error performance.
SD suffers from two basic drawbacks which have only been
partially overcome.  It requires the inversion of ill-conditioned
matrices (regularisation helps this problem at the expense
of introducing noise) and its 
complexity is not fixed but is, in general,
worse than polynomial \cite{Vikalo1,Jalden_SD_Complexity}. 
The execution time of SD can worsen considerably with many carriers,
in high noise or with low $\alpha$.
Consequently, a practical implementation could be possible only under very 
specific conditions, for relatively small signal dimensions ($N\leq32$) 
and in high signal to noise ratio (SNR) regimes.  
Therefore, the need remains
for a detector technique which can recover signals well and in a short fixed time.  

In SEFDM, the channel equalisation problem needs consideration.
Work has been done on the problem of accounting for channel effects in
SEFDM systems and \cite{ersi2010} shows that joint detection and equalisation
are possible.

An open question remains to what extent it is even theoretically possible to
recover signals.
For sampled SEFDM systems the Bit Error Rate (BER) as 
a function of energy per bit to noise power spectral density 
ratio ($\ebno$) is not known although Mazo
and Landau famously made pioneering work in this area for single
carrier systems
\cite{Mazo_FasterThanNyquist}.  A later result by Rusek et al
\cite{rusek2005,Rusek_FasterThanNyquist} demonstrates 
that for $\alpha > 0.802$ and 4-QAM 
for optimal detection the BER should be exactly that of
OFDM (although technical differences in the system mean that this
result may not precisely carry over to SEFDM systems as considered
in this work).  
It also demonstrates that this is the ``best possible" value of $\alpha$ in this setting
and for lower
$\alpha$ this the performance will diverge
(see \cite{rusek2005} for full details). 
However,
it would be expected the BER for a ``good enough" decoding system to be ``close"
to the BER for OFDM for $\alpha \in (5/6,1)$ and diverge for
smaller $\alpha$ (especially when $\alpha < 0.802$).

\section{The Spectrally Efficient
Frequency Division Multiplexing scheme}
\label{sec:sefdm_intro}
If spectral efficiency is defined as the bitrate transmitted divided by the
amount of spectrum used (bits/s/Hz) then it can be seen that multiplying
the symbol period $T$ by a factor $\alpha < 1$ but keeping the frequency
separation $F$ the same will increase the spectral efficiency (by increasing
the bitrate) by a factor of approximately $1/\alpha$ for a 
large number of carriers.  Here then we take the spectral efficiency 
of the new system as being $1/\alpha$ and hence $\alpha = 5/6$ means
spectral efficiency of 120\% or a 20\% gain.  The result is the same
(and the system mathematically identical) if $T$ is kept constant
and $F$ reduced.  

Assume
that the system has $N$ carrier frequencies each separated by a frequency
separation $F$.  Let $S_i$ (with $i \in \{0, 1, \ldots, N-1\})$
be the symbol (a complex number chosen from an ``alphabet") 
on carrier $i$ for time $[0,t)$.    
Now, ignoring
the frequency offset of the initial carrier for simplicity, the transmitted signal
($B(t)$ for broadcast signal)
in the period $[0,T]$ is given by
$
B(t) = \sum_{k=0}^{N-1} S_k \exp [ 2 \pi i k t / T ].
$
For OFDM, the interference between frequencies
is zero when the signal is integrated over the symbol period.
The discrete version of this can be considered instead
where $B(t)$ is sampled at $M$ discrete times in the set
$\{0, T/M, 2T/M, \ldots, (M-1)T/M\}$.  
This new series is $U_m$ (with $m \in \{0,1,\ldots,M-1\})$ 
where $U_m = B(Tm/M)$ and \eqref{eqn:sefdm_cont} becomes
$U_m = \sum_{k=0}^{N-1}S_k \exp [ 2 \pi i k (mT/M) T] 
 = \sum_{k=0}^{N-1} S_k \exp [ 2 \pi i km/M]$.
It is this discrete version which is traditionally
used in OFDM transmitters as it can be easily generated
using FFT techniques and then the continuous signal approximated
from this.

Now, consider the SEFDM system where $TF = \alpha <1$.  Further we assume
that $\alpha$ is rational $\alpha=b/c$ with $b,c \in \mn$ (the set of 
natural numbers).  
The equivalent equation for the transmitted signal is given by
\begin{equation}
\label{eqn:sefdm_cont}
B(t) = \sum_{k=0}^{N-1} S_k \exp [ 2 \pi i k t b/cT ],
\end{equation}
where $B(t)$ is the broadcast signal at time $t \in [0,T)$.
The discretely sampled version 
where $U_m = B(Tm/M)$  becomes
\begin{equation}
\label{eqn:sefdm_disc}
U_m =  \sum_{k=0}^{N-1} S_k \exp [ 2 \pi i km b/cM].
\end{equation}
Because of the $b/c$ factor
FFT techniques cannot be used in a straightforward manner to generate the
transmitted wave.  However, section \ref{sec:proof} shows
one way this can be done and section \ref{sec:design} shows one workable
design for a transmitter and decoder based on the insight that the 
SEFDM system with rational $\alpha$ consists of interleaved OFDM systems.

A working SEFDM system will generate and receive a continuous waveform.
(If the transmission is digitally generated as in this case the
continuous wave form would be from a smoothed version of the digital
samples).  A computer simulation is by its nature discrete.  It can
be shown that for the continuous wave form, the interchannel interference
impacting the $m$th channel from the $n$th channel in an SEFDM
system (for $n \neq m$) is given by:
$$
I(n,m)= S_n \left(\sinc[(n-m) \alpha ]/\pi\right) \exp[\pi i (n-m) \alpha],
$$
where $\sinc(x)$ is the normalised sinc function $\sin(\pi x)/x$.
For the discrete simulation the interference term is
$$
I'(n,m)= \frac{S_n \sinc[(n-m)\alpha]}{\sinc[(n-m)\alpha/M]}
\exp[\pi i (n-m) \alpha (M-1)/M].
$$
This can be thought of as the original $I(n,m)$ 
corrupted by a rotation factor $(M-1)/M$ (the origin
of this is the non-centred sampling times) and
a magnification factor $(n-m)\alpha/ M \sin[(n-m)\alpha/M]$.  Both tend
to 1 as $M \rightarrow \infty$ (as would be expected).  In short, 
the discrete simulation
will exaggerate (sometimes greatly) the interfering effects
of the SEFDM carriers.

\section{Mathematics of SEFDM systems}
\label{sec:sefdmmaths}

\begin{figure}
\begin{center}
\includegraphics[width=12cm]{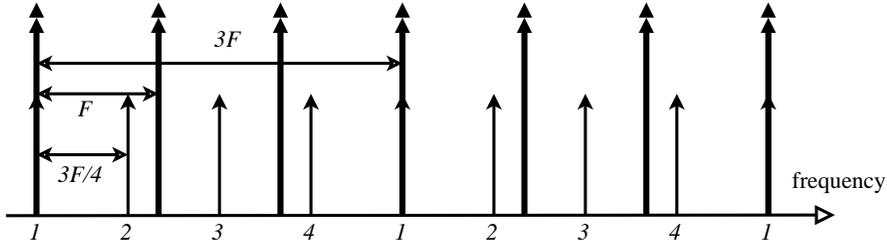}
\caption{A representative diagram of SEFDM with $\alpha=3/4$ compared
to an OFDM system.}
\label{fig:sefdmexample}
\end{center}
\end{figure}

A core insight of this paper is the viewing of SEFDM as interleaved
OFDM systems.  This is illustrated in Fig. \ref{fig:sefdmexample}.
Here the large vertical double arrows represent an OFDM system with
symbol period $T$ and frequency separation $F$.  (Remember that an
OFDM system has $TF=1$ and an SEFDM system has $TF = \alpha < 1$.)
The smaller single
arrows represent an SEFDM system with the same symbol period $T$
and a frequency separation $\frac{3}{4}F$ ($\alpha=3/4$).  

It can be seen that those SEFDM frequencies labelled 1 (below
the x-axis) always align exactly with OFDM frequencies (separated by $3F$).  
In other words, those SEFDM frequencies are an OFDM system which happens 
only to send symbols on every third carrier.  The frequencies 
labelled 2 also form an OFDM system
offset in frequency from the first by $\frac{3}{4}F$.  
In general, if $\alpha$ 
is some rational $\alpha = b/c$ with $b < c \in \mn$ (where 
$\mn$ is the set of natural numbers) this can be
viewed as $c$ interleaved OFDM systems each sending symbols on every
$b$th carrier and offset from each other by $F$.
This insight will be used both in transmitter and decoder design,
a formal proof follows.

\subsection{Proof of the equivalence of SEFDM and interleaved OFDM}
\label{sec:proof}

For notational 
convenience, assume here and throughout this paper that
an $N \times M$ matrix $\bC = [c_{nm}]$ has its indices running from zero
(not one as is more usual).  That is $n \in \{0,1,\ldots,N-1\}$ and 
$m \in \{0, 1, \ldots, M-1\}$.  
Equation \eqref{eqn:sefdm_disc} can now be written as 
$\bU = \bS \bC$, where $\bU = [U_0, U_1, \ldots, U_{M-1}]$, 
$\bS = [S_0, S_1, \ldots, S_{N-1}]$ and $\bC= [c_{nm}]$ is the 
$N \times M$ carrier matrix given by
$
c_{nm} =  \exp [ 2 \pi i nm b/cM].$
The equation $\bU= \bS \bC$ does generate the sequence to be transmitted
but the multiplication by an $N \times M$ matrix could be computationally 
intensive for large $N$.
Assume $N$ is some multiple
of $c$ (this is not a necessary assumption but makes the
notation easier) so $N=cd$ with $d \in \mn$.

\begin{theorem}
\label{thm:sefdm}
Consider an SEFDM system with $N$ carrier frequencies sampled $M \geq N$ 
times in
the symbol period and  with
rational $\alpha = b/c$ as described in equation \eqref{eqn:sefdm_disc}.
The system can be decomposed into the sum of $c$ separate
OFDM systems each with $b\lceil N/c \rceil$ carrier frequencies
and a frequency offset applied to each of these
(where $\lceil \cdot \rceil$ is the ceiling function). 
Of these frequencies, a maximum of $\lceil N/c \rceil$ actually
carry symbols.
\end{theorem}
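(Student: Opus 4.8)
The plan is to prove the decomposition directly from the discrete signal \eqref{eqn:sefdm_disc} by a change of summation index of the kind used in Cooley--Tukey FFT decimation. First I would use the standing assumption $N = cd$ with $d \in \mn$ to split the carrier index $k$ uniquely as $k = qc + r$ with $r \in \{0,1,\ldots,c-1\}$ and $q \in \{0,1,\ldots,d-1\}$. This pairing $(q,r) \leftrightarrow k$ is a bijection onto $\{0,1,\ldots,N-1\}$, so the single sum over $k$ in \eqref{eqn:sefdm_disc} becomes a double sum over $q$ and $r$ with no loss of terms.

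Substituting and using $kmb/(cM) = qmb/M + rmb/(cM)$, the exponential factorises as $\exp[2\pi i\, kmb/(cM)] = \exp[2\pi i\, qmb/M]\,\exp[2\pi i\, rmb/(cM)]$. The second factor depends only on $r$ and $m$, not on $q$, so I can pull it outside the inner sum over $q$ and obtain
$$U_m = \sum_{r=0}^{c-1} \exp[2\pi i\, rmb/(cM)]\, V_m^{(r)}, \qquad V_m^{(r)} = \sum_{q=0}^{d-1} S_{qc+r}\,\exp[2\pi i\,(qb)m/M].$$
Here the leading factor $\exp[2\pi i\, rmb/(cM)]$ is exactly the per-system frequency offset named in the statement, and there are precisely $c$ terms, one for each residue class $r$.

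The key remaining step is to recognise each $V_m^{(r)}$ as a discrete OFDM system in the sense of the sampled signal of Section \ref{sec:sefdm_intro}. Writing the OFDM carrier index as $k' = qb$, the value $k'$ ranges over $\{0, b, 2b, \ldots, (d-1)b\} \subseteq \{0,1,\ldots,bd-1\}$ as $q$ runs over $\{0,\ldots,d-1\}$ (the inclusion holds because $b \geq 1$). Defining a zero-padded symbol vector by $S'_{k'} = S_{qc+r}$ when $k' = qb$ and $S'_{k'} = 0$ otherwise, I get $V_m^{(r)} = \sum_{k'=0}^{bd-1} S'_{k'}\exp[2\pi i\, k' m/M]$, which is literally a discrete OFDM signal with $bd = b\lceil N/c \rceil$ carrier frequencies sampled at the same $M$ points; of these, only the $d = \lceil N/c \rceil$ indices that are multiples of $b$ carry a symbol. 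Since $bd = \alpha N < N \le M$, the $M$ samples comfortably resolve all $bd$ carriers, so no aliasing issue arises. This settles the claim in the case $N = cd$.

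I expect the main obstacle to be the bookkeeping for the ceiling function when $c \nmid N$. Writing $N = cd' + s$ with $0 < s < c$, the residues $r < s$ admit $d'+1 = \lceil N/c \rceil$ values of $q$ while the residues $r \geq s$ admit only $d'$, so the subsystems no longer carry equally many symbols. The clean fix is to zero-pad $\bS$ up to length $c\lceil N/c \rceil$, run the identical factorisation, and note that the padded (zero) symbols simply drop out of the transmitted sum; this is precisely why the statement asserts a \emph{maximum} of $\lceil N/c \rceil$ carriers carry symbols rather than exactly $N/c$. The factorisation and the OFDM identification themselves are then routine once the index split and the embedding into $\{0,\ldots,bd-1\}$ are set up correctly.
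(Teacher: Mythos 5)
Your proof is correct and is essentially the same argument as the paper's: the index split $k = qc + r$, the factorisation of the exponential into a $q$-independent offset $\exp[2\pi i\, rmb/(cM)]$ times an OFDM kernel $\exp[2\pi i\, (qb)m/M]$, the zero-padded symbol vectors supported on multiples of $b$, and the zero-padding fix for $c \nmid N$ all match the paper's construction (its $\bR(k)$, $\bD$, and $\bS'(k)$). The only difference is direction of presentation --- you decompose $U_m$ forward by splitting the sum, whereas the paper defines the candidate decomposition and verifies it re-sums to $\bU$ via a delta-function bookkeeping step --- which is a stylistic rather than substantive distinction.
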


\begin{proof}
Assume without loss of generality that $N$ is a multiple of $c$.
For a system where $N$ is not a multiple of $c$ the same proof
applies on the expanded system with $N' > N$ carriers 
such that $N'$ is a multiple of $c$ and no symbols are
broadcast on the final carriers ($S_N = S_{N+1} =\cdots = S_{N'-1} = 0)$.

Let $\bD$ be the $Nb/c \times M$ matrix for an OFDM system
with $Nb/c$ carriers and $M$ samples given by
$\bD =  [ d_{nm}]$ and  
$d_{nm} =   \exp [ 2 \pi i nm/M]$.
Let $\bR(k)$ be the $M \times M$ rotation matrix:
$\bR(k)= \diag[r(k)_m]$ where
$r(k)_m = \exp[2 \pi i mkb/cM]$ and $\diag$ is the matrix with
all elements zero apart from the diagonal which has its $k$th
element as $r(k)_m$.

The SEFDM transmission can be considered as the sum of $c$ interleaved
OFDM systems.  Let $\bU'(k)$ be the signal generated by the $k$th such system.
Let $\bS'(k)$ be the symbols in $\bS$ that are transmitted on the $k$th system.
That is $\bS'(0) = (S_0, S_c, S_{2c},\ldots)$ and 
$\bS'(1) = (S_1, S_{c+1}, S_{2c+1},\ldots)$ and so on.  Formally,
define the $c$ symbol vectors (each of length $Nb/c$)
$\bS'(k)$, 
for $k=1,2,\ldots,c-1$
$$
S'(k)_n = 
\begin{cases}
S_{nc/b+k} & n \mod b = 0, \\
0 & \text{otherwise},
\end{cases}
$$
where $n \in (0, 1, \ldots, Nb/c)$.
Note that each of the original symbols $S_n$ appears in exactly one of
the new symbol vectors $\bS'(k)$.  This also means
that the reverse map can be constructed
$S_n = S'(n \mod c)_{b (n-k)/c }.$

Consider the matrix equation 
$\bU' = \sum_{k=0}^{c-1} \bS'(k) \bD\bR(k).$
This is the sum of the $k$ new symbol vectors transformed by an OFDM system
and rotated.  It remains to show that $\bU' = \bU$.  Define each element of
this sum as 
$\bU'(k) = \bS'(k) \bD \bR(k)$ and therefore
$\bU' = \sum_{k=0}^{c-1}\bU'(k)$.
For any $k$ the $m$th element of $\bU'(k)$ (referred to here as 
$U'(k)_m$) is given by
\begin{equation} U'(k)_m =  \exp [2 \pi i mkb/cM] 
\sum_{n=0}^{Nb/c-1} S'(k)_n \exp [ 2 \pi i nm/M].
\label{eqn:uksum}
\end{equation}
Since $S'(k)_n = 0$ if $n \mod b \neq 0$ then the sum index $n$
can be transformed using $l=n/b$ to give
$U'(k)_m  =  \exp [2 \pi i mkb/cM] 
\sum_{l=0}^{N/c-1} S'(k)_{lb} \exp [ 2 \pi i mlb/M].$
Since $S'(k)_{lb} = S_{lc+k}$ for all $l \in \{0, 1, \ldots, N/c-1\}$ then 
$
U'(k)_m  =  
\exp [2 \pi i mkb/cM]
\sum_{l=0}^{N/c-1} S_{lc+k} 
\exp [2 \pi i mlb/M].$
The sum must be transformed again
using $p=lc+k$ and hence $l = (p-k)/c$ to
$
U'(k)_m =  
\exp [2 \pi i mkb/cM] 
\sum_{p=k}^{Nc-c+k} S_{p}  \delta(p \mod c)
\exp [2 \pi i m(pb-kb)/cM],
$
where $\delta(n)$ is the delta function which is
equal to 1 if $n= 0$ and 0 otherwise. 
A final sum transformation gives
$
U'(k)_m =  \sum_{n=0}^{N-1} S_n 
\delta(n+k \mod c)  \exp[2 \pi i n m b/cM].$
Clearly then the final proof arises
$$
\bU' = \sum_{k=0}^{c-1}  \sum_{n=0}^{N-1} S_n \delta(n+k \mod c) 
\exp[2 \pi i n m b/cM]
=  \sum_{n=0}^{N-1} S_n \exp[2 \pi i n m b/cM] 
 = \bU,
$$
where the removal of the sum over $k$ at the second equality 
sign occurs because $n+k \mod c = 0$ is
always true for exactly one value of $k$ for any given value of $n$.
\end{proof}

\section{Transmitter and receiver design}
\label{sec:design}
The transmitter and receiver designs outlined in this section have several
advantages over those in the SEFDM literature.  The receiver and transmitter
designs also have much in common which would help with the cost of building
them.

\subsection{Transmitter design}
\label{sec:transmitter}

The generation of SEFDM signals using
the Inverse 
Discrete Fourier Transform (IDFT) has been proposed by the 
authors 
in \cite{safa200909} and this has led to a recent hardware
implementation \cite{perrett2011}.
As the SEFDM signal can be described as a sum of overlapped independent 
rotated OFDM signals, it can be shown that the SEFDM transmitters can 
be built using OFDM generation techniques. An OFDM signal is efficiently generated 
using the Inverse Discrete Fourier Transform (IDFT) \cite{Weinstein_Windowing}. 


\begin{figure}
\centering
\includegraphics [width=12cm] {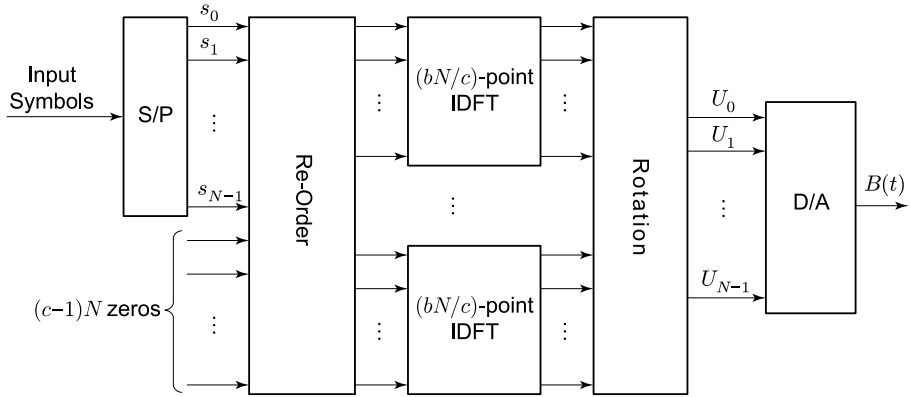}
\caption{The IDFT implementation for SEFDM transmitter. } 
\label{fig:red_tx} 
\end{figure}

From Theorem \ref{thm:sefdm} it can be shown that adding $c$ 
rotated OFDM systems can create the same signal as an SEFDM system.
This can be utilised to build an SEFDM transmitter as illustrated 
in Fig. \ref{fig:red_tx}. 
The transmitter starts 
by reordering the input symbols and insert zeros at appropriate 
locations to generate the $c$ symbol vectors. 
The symbols 
reorder block generates the $\bS'(k)$ vectors. 
These vectors are 
then fed into the $c$ IDFT modules. The outputs of the IDFTs are then 
rotated using the rotation matrices $\bR(k)$ and combined to generate the time 
sampled sequence $\bU$, which can  be fed into a digital to 
analogue converter (D/A) to finally
generate the continuous time signal $B(t)$. 

Generation of SEFDM with the IDFT offers many advantages. 
The IDFT based system is ready for digital implementation 
providing all the digital over analogue advantages. 
The structure of the SEFDM system is based on similar building blocks 
to the widely available OFDM system which will facilitate a smooth 
changeover. In addition, as will be shown 
later the receiver and transmitter have the same structure 
which can enable dual operation of the same equipment and consequently 
reduce the design and implementation costs.

\subsection{Receiver/decoder design}
\label{sec:decoder}

Once the SEFDM signals for one symbol period have been received they must
then be decoded to return the original symbol. 
The receiver therefore attempts to recover the original symbols by
decoding the interleaved OFDM systems individually by subtracting the
estimated interference from the other OFDM systems (for this reason we term this
the ``stripe" decoder).
Note that the design here is heuristic, no
proof of convergence is given (and one may not be possible).  The
justification for the design is that it is intuitive and it works
in software tests.  The receiver/decoder is shown diagrammatically as a 
data flow diagram in Fig. \ref{fig:decoder}.

\begin{figure}
\centering
\includegraphics [width=7cm] {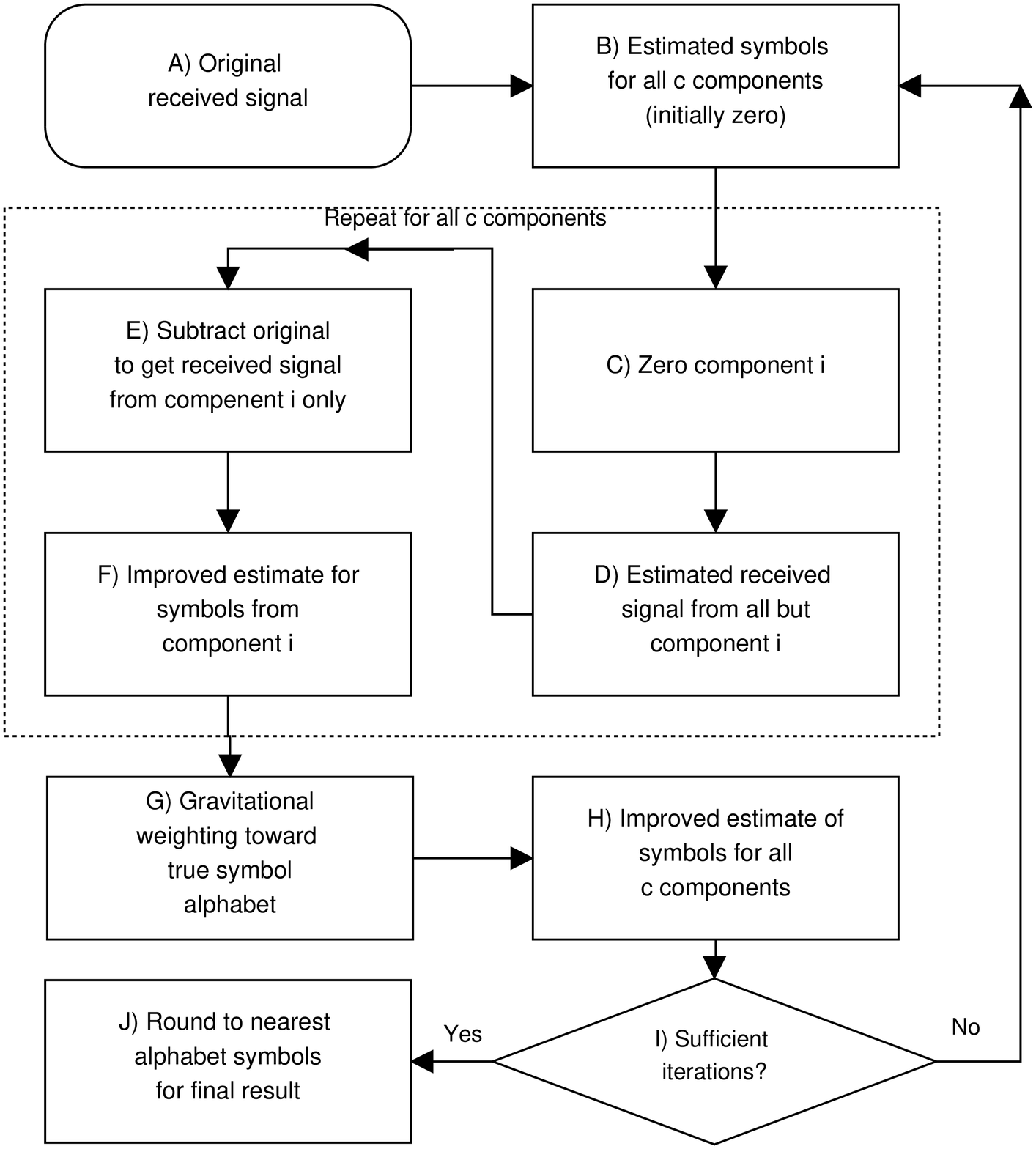}
\caption{Receiver/decoder dataflow diagram.} 
\label{fig:decoder} 
\end{figure}

Begin with a received signal (box A in diagram) and
an initial estimate that all symbols are
$0 + 0i$ (box B in diagram) and iteratively
improve the estimates by isolating the signal arising from each of
sub OFDM systems (see Fig. \ref{fig:sefdmexample}).  After several
iterations the estimates converge to the correct input symbol
and are eventually rounded to the closest symbol in the symbol
alphabet in use.

Consider, again, the
SEFDM system with $N$ carriers and $\alpha=b/c$.
Let $\bU$ (as in section \ref{sec:proof})
be the received signal (for now assume it is not corrupted by noise).  
If the system is OFDM, decoding is simple.  The
received frequencies are orthogonal and a simple IDFT recovers the
symbols on each carrier.  
Now, it follows that if the symbols for $c-1$ of the interleaved OFDM
systems were known then the symbols
on the remaining OFDM system could be obtained.  This is achieved
by firstly subtracting that portion of $\bU$ which arises from the $c-1$
OFDM systems with known symbols and secondly, performing the inverse DFT.
Using the notation of section \ref{sec:proof}, if $\bU(k)$ is the
signal arising from the $k$th interleaved OFDM system then 
$\bU - \sum_{k=1}^{c-1} \bU(k)$ is the signal arising from the zeroth
OFDM system $\bU(0)$.  An IDFT of $\bU(0)$ recovers the
symbols.  A similar process would be required if 
$\bU(0), \bU(2), \bU(3), \ldots, \bU(c-1)$ were known and $\bU(1)$
were to be recovered.  In that case a frequency shift $\bR(1)$ (again as
in section \ref{sec:proof}) would need to be applied before the inverse
DFT.  
It should be noted that even if $\bU$ is corrupted by AWGN, 
the above process can be used to produce a maximum likelihood
estimate of the original broadcast symbols by
rounding it to the nearest letter in the
``symbol alphabet" being used.

Given estimates of
the correct symbols for the $c$ interleaved OFDM systems then
improved estimates can be produced (inner dotted box in
diagram).  The estimates are produced
by, for each OFDM system in turn: first subtract the signal from
the $c-1$ other OFDM systems (box C, D and E in diagram) 
and secondly perform an inverse DFT 
with frequency shift to get an improved estimate for that OFDM subsystem
(box F in diagram).   To improve performance a
``gravitational" model is added to pull estimates towards symbols
in the symbol alphabet (box G in diagram).
This is repeated for $J$ iterations (box I in diagram).
Note that estimates are ``soft estimates" (complex numbers which
are not necessarily 
members of the symbol alphabet) until the final stage of processing
the estimates are mapped to the nearest member of the symbol
alphabet (box J in diagram).

\begin{enumerate}
\item Set $\widehat{\bS}$ the estimated symbols to $0+0i$ (box B).
\item Let $j:=1$ ($j$ counts iterations -- there are $J$ iterations).
\item Let $\widehat{\bS(0)}$, $\widehat{\bS(1)}$, $\ldots$ be
the estimates for the symbols of the $c$ interleaved OFDM systems.
The $\widehat{\bS(k)}$ together
make $\widehat{\bS}$ as in section \ref{sec:sefdm_intro}.
\begin{enumerate}
\item For each of the $c$ systems in turn, remove that part of
the signal generated by all symbols in $\widehat{\bS}$ apart from
$\widehat{\bS(k)}$ (box C and D).  Use this to estimate a new $\widehat{\bS(k)}$
and hence a new $\widehat{\bS}$ (box E and F). 
\item For each of the $N$ symbols calculate a ``gravitationally weighted"
version of $\widehat{\bS}$, $G(\widehat{\bS})$ (box G).
\end{enumerate}
\item $\widehat{\bS}:=  \widehat{\bS}(J-j)/J + (j/J) G(\widehat{\bS})$ (box  H).
\item If $j < J$ then $j:=j+1$ and go to step 3 (box I).
\item Finally, $\widehat{\bS}$ is ``sliced" to the nearest alphabet symbol
for each estimated symbol $\widehat{S_i}$ (box J).
\end{enumerate}

The two central steps of the algorithm (a) and (b) above
require slightly more explanation.  If $\br$ is the received signal
corrupted by noise then, to estimate the $k$th OFDM system first
calculate $C(k) = \br - \sum_{j=0, j \neq k}^{c-1} \widehat{\bU(j)}$ where
$\widehat{\bU(j)}$ is the estimated signal transmitted by just the estimated
symbols in
the $j$th OFDM system $\widehat{\bS(j)}$ (box D).  
$C(k)$ can then be shifted in frequency by 
multiplication $\bR(k)$ to produce an estimate of the signal (plus noise)
arising solely from the $k$th OFDM system (box E).  This
can be decoded in the usual OFDM manner (using IDFT) as
if it were an OFDM system transmitting
on every $b$th carrier.  This produces a new estimate for the symbols
on the $k$th OFDM system $\widehat{\bU(k)}$.  

These estimate symbols are truncated
to ensure that no symbol has a real or imaginary part outside the
range of the signalling alphabet (for example, if the system is 4-QAM estimates
are rounded so all real and imaginary parts are in the range $[-1,1]$.
The new estimate for $\widehat{\bU(j)}$ 
can immediately be used to update $\widehat{\bS}$ (box F).  This takes place
for each of the $c$ OFDM systems in turn (dotted large box) 
to produce a new $\widehat{\bS}$.

Note that while this part of the decoder design seems complicated, in
fact, the decoder can be implemented using the transmitter.  To calculate
$C(k)$ from $\br$ the received signal and $\widehat{\bU(j)}$ (the estimated
symbols on all carriers $j \neq k$) simply feed the estimated symbol set
$\widehat{\bS}$
with symbols $k, k+c, k+2c,\ldots$ set to zero to the transmitter.  This produces
an estimate of the signal which would be transmitted by all but the $k$th
OFDM system (corrupted by noise).  This signal can be decoded using IDFT
as in standard OFDM to produce an improved estimate for $\widehat{\bS(k)}$
the symbols of the $k$th OFDM system.

The ``gravitationally weighted" $G(\widehat{\bS})$ (box G) is calculated by
examining each estimated symbol in turn $\widehat{S_0}$, $\widehat{S_1}$ and so
on and producing the weighted sum of each of the symbols in the alphabet
weighted by the inverse of the distance to them (as a gravity law).  
If $A$ is the symbol alphabet
(say, $1+0i$ $-1+0i$ for BPSK)
then 
$$G(\widehat{S_i}) = K \sum_{a \in A} a/d(a,S_i)^2,$$
where $d(a,b)$ is the Euclidean distance between the two points in
complex space and $K$ is the normalising constant $1/\sum_{a \in A} a/d(a,S_i)^2$.
If for any $a \in A$, $d(a,S_i) =0$ then $G(S_i) = a$, that is, if the estimated
symbol happens to be exactly on a point in the alphabet (to machine precision)
then that point is returned.  Many similar weighting schemes could be tried but
this one appears sufficiently effective and quick to calculate.

The complexity of the decoder system is tied to the complexity of the transmitter.
(The ``gravitational" part is of negligible complexity).  To subtract an estimated
signal a signal is generated by the transmitter.  The final complexity of the decoder
then is a fixed linear multiple of the transmitter complexity -- this multiple being
a product of $c$ (the number of interleaved OFDM systems) and $J$, the number of iterations
in step (a) above (experiment found 20 to be a reasonable value).

\subsection{Comments on Implementation}

Numerical results modelling the work reported here demonstrate attractive error
performance (as will be seen in the next section)
at a much reduced complexity when compared to optimal
iterative detection algorithms such as SD. Ultimately the aim is to
realize the proposed designs in hardware. Examining the structure
of the proposed system reveals the support of an efficient implementation
path. The transmitter design relies on general purpose IDFT operations
which can be efficiently evaluated with the Inverse Fast Fourier Transform
(IFFT). We have recently reported the implementation of such transmitter on a 
reconfigurable field programmable gate array (FPGA)
architecture \cite{marcus_2011_ict} and demonstrated its operation,
showing its ability to perform real time tuning of $\alpha$.  Furthermore, 
design studies as very large scale integrated circuit (VLSI) 
structures have also been reported in \cite{safa201105_ISCAS}.

Examining the structure of the stripe decoder, shows that the main
components are standard DFT modules which can in turn be realized
with the FFT algorithm. Implementations of DFT based demodulators
for SEFDM system have also been reported in \cite{safa_pimrc2011}. 
A main difference in the design is that multiple DFT blocks are needed for
the stripe decoder while a single longer DFT is implemented in \cite{safa_pimrc2011}.
However, the DFT blocks arrangements in the stripe decoder may follow
the same pattern as those of the transmitter design.

\section{Simulation results}
\label{sec:results}
Numerical simulation was carried out to determine the performance of the transmitter
and decoder
system (as mentioned in the introduction the transmitter has been tested in hardware
this work is ongoing for the receiver/decoder).  
A sampled SEFDM system is characterised by $N$ (the number of carriers
in one symbol period), the symbol alphabet (what allowable complex
symbols are considered), $M$ (the number of samples obtained for decoding
in one symbol period), $\alpha=b/c$ (the compression ratio) and $\ebno$ the energy 
per bit to noise power spectral density 
ratio.  
As previously remarked, the spectral efficiency of an SEFDM system compared
with OFDM is $1/\alpha$.  So, for $\alpha = 5/6$ the spectral efficiency increases
by 20\% and for $\alpha=4/5$ by 25\%.

\subsection{Simulation description}

Code to simulate the system was written in python.  The code implements
transmission of SEFDM using the FFT method 
described in section \ref{sec:design}.  Test signals are generated from random bits.
Additive  White Gaussian Noise (AWGN) with
a given $\ebno$ is then added.  Channel effects such as fading 
and frequency and phase offsets
as well as system aspects such as 
channel estimation are not
considered in this work. The assumption of a simplified AWGN channel 
serves to illustrate the
basic concepts of the work and its practicability. More sophisticated channel 
models are the subject of ongoing work. It has been shown that joint detection 
and equalisation using sphere decoder can provide attractive BER performance \cite{ersi2010}. 
The authors believe that
a joint detection and equalisation technique based on the proposed
detection algorithm from this paper could be used to alleviate the problem.

The simulations described here are all performed with the assumption that data
is arriving as fast as the system can broadcast it (that is, the system
is at maximum load and there is always
a symbol on every channel in every period).  The results would not be
altered if this load fell (a blank symbol could be broadcast).
The choice of symbols is
random.  As the relationship of symbol patterns to BER is
unknown, completely random choices of symbols is the best way to 
obtain the actual BER a working system would have.

Three decoding schemes are implemented.  The first is
the ``stripe" decoding technique from this paper
(section \ref{sec:decoder}). 
The maximum likelihood (ML) method explicitly
tests every possible combination of alphabet symbols on each carrier and measures
the difference between the time series generated and the received signal.  While
this is in some sense ``perfect" as a decoding scheme it is computationally 
intractable for large $N$ (assuming for simplicity that $M=N$).  The number of
tests requires increases as $O(A^N)$ where $A$ is the number of symbols in
the alphabet and $N$ the number of carriers and assuming each test can be
performed with FFTs in parallel then each is of order $O(N \log N)$.
By contrast the stripe decoder complexity is $O(N \log N)$ (although this
must be multiplied by the constant $J$ the number of iterations performed).
The ``sphere decoder" method
attempts to more intelligently assess only the ``nearly correct" 
symbol sequences.  That is it uses a dynamic programming technique 
over only a subset of the possible symbol space.  However, because
it relies on numerical matrix inversion, it suffers problems with large
numbers of channels or low values of $\ebno$ as the number of symbol
combinations investigated becomes large.
The three methods are referred to in the results as ML, stripe and sphere.
The ML should represent a ``best possible" result and the sphere decoder
should also be optimal except in cases where the algorithm fails
to find a solution -- in practice the result coincides almost exactly with
the optimal solution where that is known.

To get statistically representative results, a high number of iterations
must be performed with each iteration representing one symbol period.
95\% confidence intervals have been calculated for
all experiments which measure bit error rate
on the assumption that each decoded bit is an independent 
trial (in fact, for say a 128 carrier 4-QAM system the error rates on 
groups of 256 bits composing one symbol period
will be loosely correlated but between simulated symbol periods the
bits are independent trials).  For most of the graphs plotted the 95\%
confidence intervals are too close together to show up and are omitted.

For space reasons runtime efficiency results are not shown here (and
results of runtime for computer simulation are not expected to translate
directly to better performance when implemented on hardware). 
The runtime results for transmitter and decoder 
were completely in line with the expected theoretical
results -- the time taken to produce a signal for one symbol period 
using the transmitter code was $O(M \log M c)$ where $c$ is the number 
of OFDM systems to be added and $M$ the number of samples per symbol
period.  To get accurate experimental estimates for BER it was necessary
to generate and decode tens of thousands of symbol periods (since the
BER was extremely low).  In our software simulations the ``stripe"
decoder could transmit and decode 128 symbols per period in an
acceptable runtime whereas the ML decoder could perform no more than 4 and
the sphere decoder no more than 8.  In short, the transmitter design and
receiver/decoder designs were, as predicted, a fixed, small multiple of
the runtime of an FFT routine.

\subsection{Decoder results, prediction accuracy}

It should be emphasised throughout this section that the OFDM result
(the theoretical BER line) represents the best possible result obtainable
in the case of an orthogonal system.  The maximum likelihood
(ML) result represents (within the bounds of statistical errors)
a best possible result for the simulation parameters used ($N$, $M$, $\alpha$ and
$\ebno$).  The sphere decoding result will also be ``near" optimal for
the simulation parameters used.


\begin{figure}
\centering
\includegraphics[width=8cm]{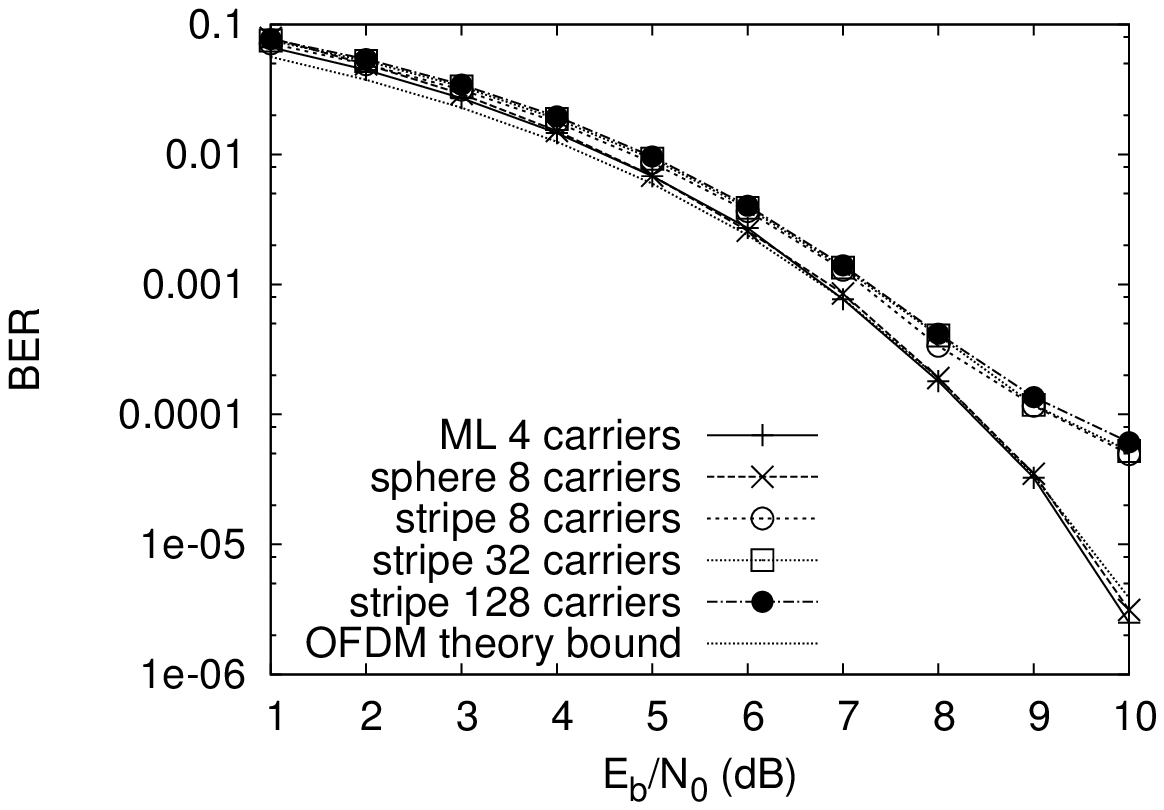}
\includegraphics[width=8cm]{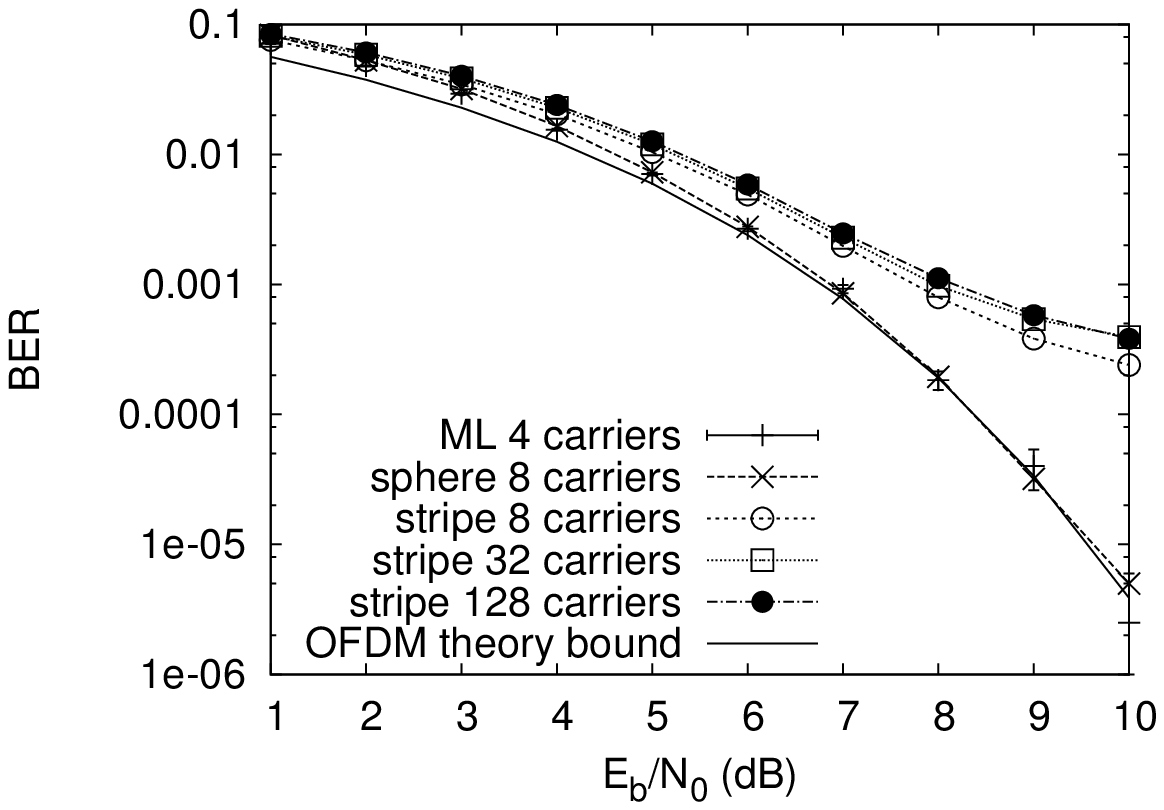}
\caption{Bit error rate for $\alpha=5/6$ (left) and
for $\alpha=4/5$ (right)
using 4-QAM.}
\label{fig:4QAM_a4_5a5_6}
\end{figure}

Fig. \ref{fig:4QAM_a4_5a5_6} (left)  shows results for the stripe decoder using
4-QAM for $\alpha = 5/6$.  The sphere decoder and the ML decoder
results are a good match with the theoretical
best possible except for low $\ebno$ (where they do not quite meet
the optimal bound as we might expect).  However, it is worth reiterating
that the theory applies to an idealised situation with complete knowledge
of the whole time signal in analysis but the simulation (and a real working
system) only considers samples.  The stripe
decoder certainly shows worse performance than the perfect theoretical
performance.  However, it is comparable to the sphere decoder in low $\ebno$
and for $\ebno > 5.0$dB the worsening of performance is equivalent to
approximately an extra 1 dB of $\ebno$
(the BER for OFDM at $9.0$dB  is the same as
is the BER for SEFDM at $8.0$dB).  
Note that it is this horizontal separation
which is relevant since the design question is ``how much more power (or less noise)
would be necessary to regain the lost performance?"
This is certainly very good performance.  In low 
$\ebno$ environments the stripe decoder would certainly be as good in terms
of BER as OFDM.  In high $\ebno$ environments (above 8dB) the stripe 
decoder is able to achieve an acceptable bit error rate for wireless systems 
where BER below 0.0001 are entirely reasonable although an OFDM system
would have lower errors.  In this case an SEFDM could either fall
back to OFDM or transmit at a higher power to reduce the $\ebno$
until the BER was acceptable.

Fig. \ref{fig:4QAM_a4_5a5_6} (right) shows results for the stripe decoder using
4-QAM for $\alpha = 4/5$ -- this is just below the limit of $\alpha = 0.802$
which is considered the ``best possible" for idealised recovery of the
signal.  The stripe
decoder again shows degraded performance.  
However, it is again comparable to the sphere decoder in low $\ebno$
and for $5.0$dB $< \ebno < 9.0$dB the worsening of performance is equivalent 
to an extra
1 or 1.5 dB of noise (that is the BER for SEFDM at $9.0$ dB is the same as
the BER for OFDM at $7.5$ dB).  
This is an acceptable power penalty/price to pay  given the advantage of 
bandwidth saving.  In low 
$\ebno$ environments the stripe decoder would certainly be as good in terms
of BER and much preferable in terms of spectral efficiency.  However, it
seems that the performance has worsened by going below the theoretical
$\alpha = 0.802$ limit even slightly.




\begin{figure}
\centering
\includegraphics[width=8cm]{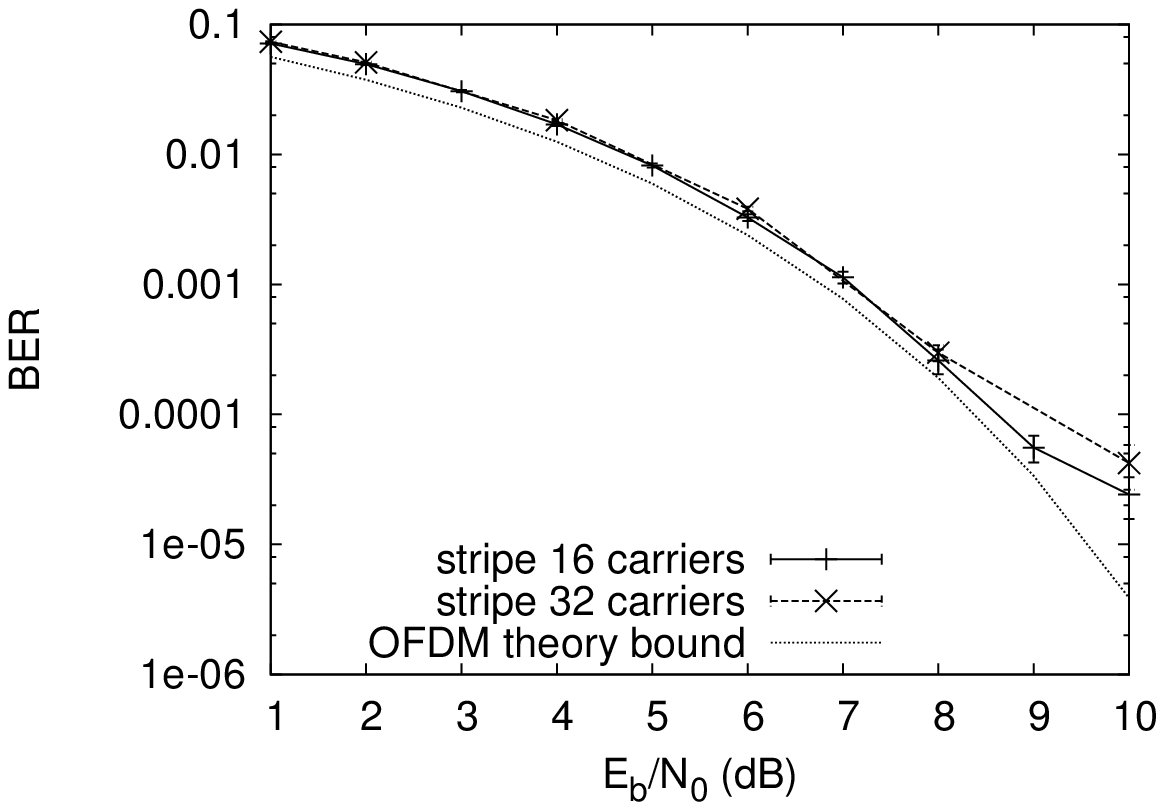}
\includegraphics[width=8cm]{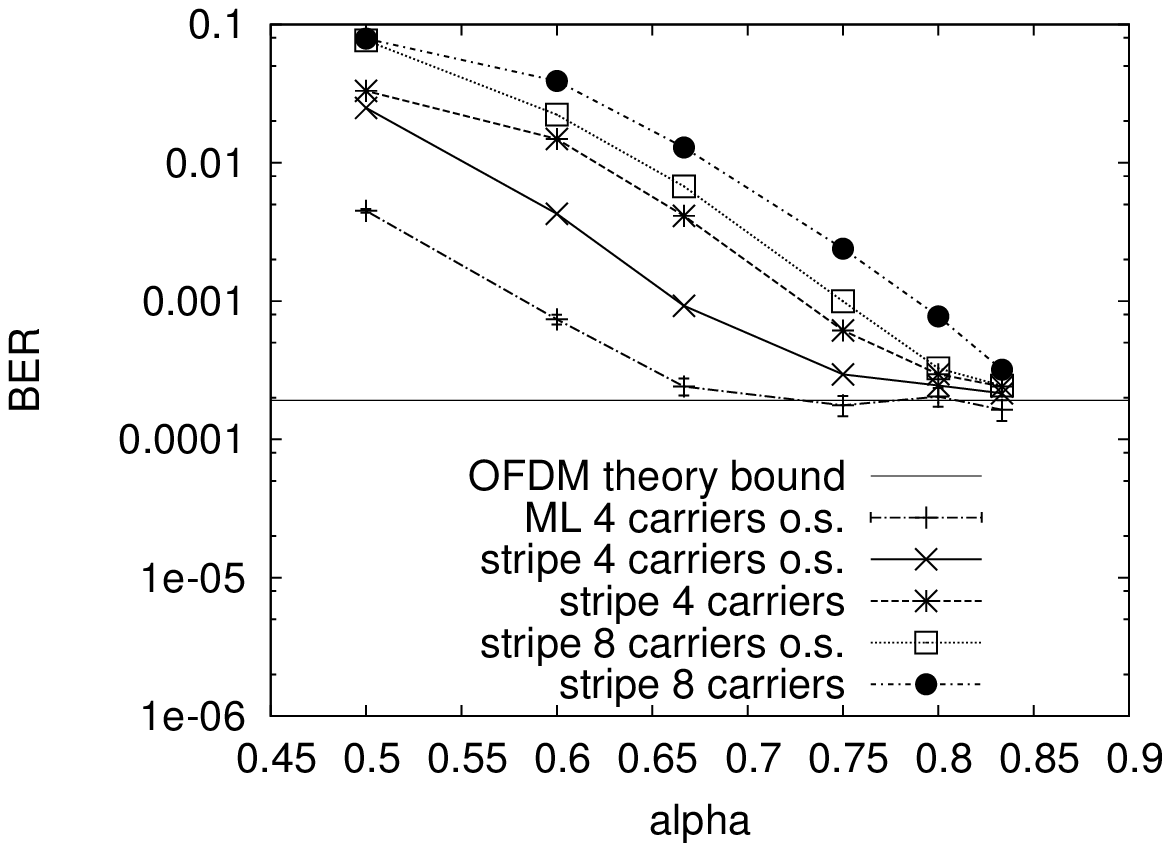}
\caption{Bit error rate for 4-QAM with oversampling, $\alpha=5/6$
fixed, $\ebno$ varying (left) and $\ebno = 8.0$ dB fixed $\alpha$ varying (right).}
\label{fig:4QAM_os}
\end{figure}

Fig. \ref{fig:4QAM_os} shows the improvements which oversampling
can bring.  
Recall from section \ref{sec:sefdm_intro} that the simulation in fact over
estimates interference when the number of samples is ``low".  More
samples will produce interference levels closer to the real life 
(continuous) system.

Fig. \ref{fig:4QAM_os}(left) shows the results for 4-QAM with
oversampling with $\alpha = 5/6$.  For 16 and 32 carriers an oversampling rate such
that $M=16N$ is tried --- 16 samples per
carrier in every symbol period.  This should be compared with Fig. 
\ref{fig:4QAM_a4_5a5_6} (left) which is the result for $N=M$ --- 
one sample per carrier in every
symbol period.
For 16 and 32 carriers the BER has almost no worsening from the theory except in the
highest signal to noise ratio where the degradation still remains modest.
Overall, then the performance of
the algorithm is extremely satisfactory with oversampling.
Oversampling
results for $\alpha=4/5$ are less successful however. 

Fig. 
\ref{fig:4QAM_os}(right) shows the results with 
$\ebno=8.0$ dB and $\alpha$ varied. 
The stripe detector is tried with 4 and 8 carrier systems and
heavy over sampling --- 
in this experiment
$M=128$ when $N=4$ or $N=8$.  In this case the improvement is
marked with a significant improvement in BER using oversampling.  Since the
stripe detector can perfectly happily function with large channel numbers it
can also work well with smaller numbers of channels and oversampling.  
The oversampling also improves the performance of the ML estimator, making it 
stay closer to the theoretical OFDM limit for smaller values of $\alpha$.
This figure, however, shows an important theoretical limit to what can
be gained by SEFDM type systems even with optimal detection and an extremely
small number of channels.  When $\alpha < 2/3$
the BER begins to increase markedly.  Therefore, even with perfect detection
it could never be expected that spectral efficiency gains of more than
50\%  $(1/\alpha - 1 $ with $\alpha= 2/3$) can be achieved even for the four carrier
system.  For more carriers the limit of $\alpha \simeq 0.802$ seems likely to apply.
These oversampling results confirm the intuition from section \ref{sec:sefdm_intro}
that the discrete simulation exaggerates the interference effect and more samples
will bring the interference (and hence BER) down.

Finally, Fig. \ref{fig:BPSK} shows results using the stripe decoder on
BPSK for $\alpha = 1/2$ (with $\alpha=1/2$ the system is that of FOFDM 
\cite{Rodrigues_FOFDM}).
The results are nearly ``perfect" with little deviation from the theory line
for OFDM which represents the best possible BER for an OFDM system
with that $\ebno$.  The
decoding using the ``stripe" method is ideal for this scenario.  More
than 128 carriers cannot be tested quickly enough to get sufficiently
accurate error prediction for the lower $\ebno$ values.  However, there
seems no reason to believe that the bit error rate increases with the
number of carriers in this scenario.  It can be seen that the results
are near ``perfect" for BPSK with $\alpha=1/2$.  
However, this is not as useful as it might appear since BPSK with 
$\alpha=1/2$ only carries the same amount of data as 4-QAM OFDM.

\begin{figure}
\centering
\includegraphics[width=8cm]{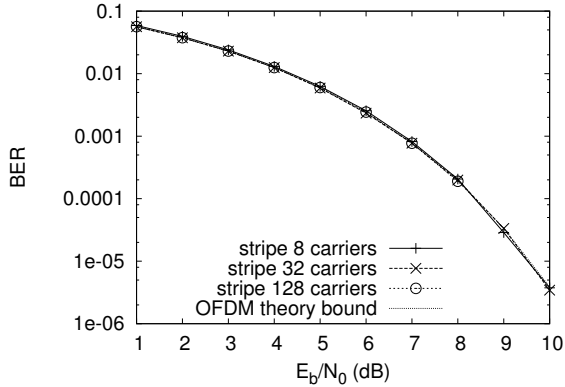}
\caption{Bit error rate using BPSK, fixed $\alpha= 1/2$ varying
$\ebno$ }
\label{fig:BPSK}
\end{figure}

In summary then, the results in this section show that the transmitter
and decoder perform well for 4-QAM with $\alpha=5/6$ but these good results
fall off to be a less acceptable performance for $\alpha=4/5$ in
tune with the expectation from the theoretical results of Rusek et al
\cite{rusek2005,Rusek_FasterThanNyquist} suggesting a lowest possible
value of $\alpha = 0.802$ before interference cannot be compensated
for.  The system performed better with heavy over sampling as
suggested by section \ref{sec:sefdm_intro} and performed extremely well 
(indeed the results were indistinguishable from optimal) for BPSK 
with $\alpha=1/2$.
Although no channel model was used in this simulation, complementary work
in \cite{ersi2010} shows
that SEFDM detection and equalisation can give good BER performance
in dispersive channel environments when the receiver employs a 
regularised sphere detection mechanism.

It remains to be seen whether the system would be practical for modern 
systems with a very large number of carriers (512 and beyond). 
Detailed investigation of the properties of SEFDM in \cite{safa_thesis}
have shown that the condition number of the matrix representing the
carriers  increases with the number of carriers and this increases the
complexity of the problem for any detection method which involves matrix
inversion. However, with the detection method proposed here, 
the error rate is not expected to be seriously compromised and 
we are encouraged by the results shown in Fig. \ref{fig:4QAM_a4_5a5_6} 
where there is only a slight degradation of the error when the 
number of carriers is increased from 16 to 128. Current software 
limitations for testing 
with larger number are being addressed by implementing the transmitter
and receiver in hardware and this is underway.  The building block
for the system is the DFT as with ODFM and, hence, the speed
of execution is not expected to be an issue
in a hardware implementation.

\section{Conclusions}
\label{sec:conclusions}
This paper describes the design of a simple system for transmitting, receiving
and decoding Spectrally Efficient FDM (SEFDM) signals.  These signals can
simply be generated by a transmitter mechanism very similar to that of 
standard OFDM with little increase in complexity.  The decoder is more difficult to
implement but the increase in complexity with the number of channels remains that of
standard OFDM $O(M \log M)$ (where $M$ is the number of samples).

Detailed modelling and simulations show that the decoder described in this paper can give 
an increase in spectral efficiency of 20\% ($\alpha=5/6$) with little noise
penalty and even 25\% 
($\alpha=4/5$) in some circumstances (with a noise penalty close to 2dB 
for a BER of $10^{-4}$).
Oversampling can be used to compensate
for almost all of the noise penalty for $\alpha=5/6$.  With oversampling this
system is ``almost perfect" producing the expected gain in spectral
efficiency, relative to OFDM, with minimal error degradation.

Naturally, work remains to be done in this area.  The decoder proposed here is a simple
heuristic chosen because it gets a ``good enough" solution in a very short time. 
It seems
likely that similar heuristics could close much of the small gap between 
the solution here and the ``optimal" solution.  The simulations here do not account
for channel fading, however, modelling using techniques similar to those
of \cite{ersi2010} to
perform channel equalisation in SEFDM are currently underway.

In conclusion, the system proposed and modelled here could produce gains in spectral efficiency
compared with an equivalent OFDM system.  The system is only slightly more complex to
implement than the OFDM system and functions in environments with similar noise levels,
particularly when oversampling is used.

\bibliographystyle{IEEEtran}
\bibliography{All}

\end{document}